\newcommand{\repthanks}[1]{\textsuperscript{\ref{#1}}}
\def\thanks}
\let\repthanks\repthanksunskip\def\thanks}
\def\thanks}
\let\repthanks\@gobble\def\thanks}
\newcommand\repthanksunskip[1]{\unskip{}}
\newcommand{\printfnsymbol}[1]{%
  \textsuperscript{\@fnsymbol{#1}}%
}
\newtheorem{defx}{Definition}
\begin{document}
\title{Secure Weighted Aggregation for Federated Learning}
%
%
\author{Jiale Guo\inst{1} \thanks{Both authors contributed equally to this research.\protect\label{X}}, Ziyao Liu\inst{1} \repthanks{X} \and
Kwok-Yan Lam\inst{1} \and
Jun Zhao\inst{1} \and
Yiqiang Chen\inst{2} \and
Chaoping Xing\inst{3}}
\authorrunning{J. Guo and Z. Liu et al.}
%
\institute{Nanyang Technological University, Singapore \\
\email{\{jiale001,ziyao002\}@e.ntu.edu.sg},
\email{\{kwokyan.lam,junzhao\}@ntu.edu.sg} \and
University of Chinese Academy of Sciences \& Peng Cheng Laboratory, Beijing, China
\email{yqchen@ict.ac.cn} \and
Shanghai Jiao Tong University, Shanghai, China\\
\email{xingcp@sjtu.edu.cn}}
\maketitle              
\begin{abstract}
The pervasive adoption of Internet-connected digital services has led to a growing concern in the personal data privacy of their customers. On the other hand, machine learning (ML) techniques have been widely adopted by digital service providers to improve operational productivity and customer satisfaction. ML inevitably accesses and processes users’ personal data, which could potentially breach the relevant privacy protection regulations if not performed carefully. The situation is exacerbated by the cloud-based implementation of digital services when user data are captured and stored in distributed locations, hence aggregation of the user data for ML could be a serious breach of privacy regulations. In this backdrop, Federated Learning (FL) is an emerging area that allows ML on distributed data without the data leaving their stored location. Typically, FL starts with an initial global model, with each datastore uses its local data to compute the gradient based on the global model, and uploads their gradients (instead of the data) to an aggregation server, at which the global model is updated and then distributed to the local data stores iteratively. However, depending on the nature of the digital services, data captured at different locations may carry different significance to the business operation, hence a weighted aggregation will be highly desirable for enhancing the quality of the FL-learned model. Furthermore, to prevent leakage of user data from the aggregated gradients, cryptographic mechanisms are needed to allow secure aggregation of FL.  Previous works on FL mainly focus on the FL framework and mechanisms for weighted aggregation. In this paper, we propose a privacy-enhanced FL scheme for supporting secure weighted aggregation. Besides, by devising a verification protocol based on Zero-Knowledge Proof (ZKP), the proposed scheme is capable of guarding against fraudulent messages from FL participants. Experimental results show that our scheme is practical and secure. Compared to existing FL approaches, our scheme achieves secure weighted aggregation with an additional security guarantee against fraudulent messages with an affordable 1.2 times runtime overheads and 1.3 times communication costs.

\keywords{federated learning \and secure aggregation \and data disparity evaluation \and homomorphic encryption \and zero-knowledge proof}
\end{abstract}
\section{Introduction}

The pervasive adoption of Internet-connected digital services has led to a growing concern in personal data privacy of their customers. On the other hand, machine learning (ML) techniques have been widely adopted by digital service providers to improve operational productivity and customer satisfaction. ML inevitably accesses and processes users’ personal data, which could potentially breach the relevant privacy protection regulations if not performed carefully. The situation is exacerbated by the cloud-based implementation of digital services when user data are captured and stored in distributed locations, hence aggregation of the user data for ML could be a serious breach of privacy regulations. Such a scenario where multiple data stores (or custodians) jointly solve a machine learning problem while complying with privacy regulations has attracted tremendous attention from academia and industry. Privacy-preserving techniques such as differential privacy (DP), fully homomorphic encryption (FHE), and secure multi-party computation (MPC) are widely believed to be promising approaches to achieve this goal. However, it is well known that DP requires a tradeoff between data usability and privacy \cite{zhao2020local}, while MPC and HE offer cryptographic privacy with high communication or computation overheads.

In this backdrop, Federated Learning (FL) \cite{kairouz2019advances} is an emerging area that allows ML on distributed data without the data leaving their stored location. For example, an international bank typically has data centres in multiple countries, domestic regulations of each country may not allow the bank to move customer data out of the jurisdiction; hence FL will be a useful approach to allow ML overall distributed data through the participation of all local data centres. Typically, FL starts with an initial global model, with each data store (in the respective data centre) uses its local data to compute the gradient based on the global model, and uploads their gradients (instead of the data) to an aggregation server, at which the global model is updated and then distributed to the local data stores iteratively.

However, depending on the nature of digital services and FL systems, data captured at different locations may exhibit disparity. For instance, digital inclusion may vary from country to country and, due to the need for service localization, the type of collected data may also be different. Besides, the technical competence of local personnel may also vary, thus ML at local data stores may suffer from label biases. This means that different local datasets may have different significance to the global FL model, and thus the organization's business operation. Hence a weighted aggregation scheme will be highly desirable for enhancing the quality of the FL-learned model. In this case, the central server is required to evaluate the disparity across all local datasets in order to compute the weightings, and then aggregate all the clients' locally trained models according to their weights. To calculate the weight, measurements on data size and data quality are widely adopted. For example, in the most popular FL approach FedAvg \cite{mcmahan2017communication}, the locally trained models from users are weighted by the percentage of their data size in the total training data. Based on FedAvg, the authors in \cite{chen2020focus} further evaluate clients' label quality by calculating the mutual cross-entropy between data and models of both the central server and clients. Although approaches in previous works resulted in a good performance in dealing with disparity across local datasets and improved the accuracy of the global FL model, they did not address the privacy guarantee for clients' data.

Furthermore, from the angle of privacy protection, digital service providers need to address another important issue of FL, that is the feasibility of information leakage from the gradients. As pointed out in \cite{zhu2019deep}, it is possible to derive the local training data from the publicly shared gradients, or models, in a multi-node machine learning system such as an FL system. Experimental results showed that such deep leakage from gradients allows the attacker to achieve pixel-wise accurate recovery for images and token-wise matching for texts in common computer vision and natural language processing tasks. Secure aggregation schemes \cite{bonawitz2017practical,bell2020secure} aim to deal with this issue. However, during the FL process, the distributed data stores and the central server may still receive fraudulent messages due to insider frauds, which may affect the accuracy of the global FL model. For example, consider an enterprise FL system. FL participants are required to annotate the data with labels, and then train their local models with the labeled data. The competence and productivity of each FL participant will have a non-trivial impact on the label quality of the training data, which should understandably carry different weights in the overall ML. The weights are calculated by the data disparity evaluation scheme as mentioned earlier. The larger weight usually serves as an indicator of the higher quality, hence the competence and productivity of the local personnel, and vice versa. In this case, an incompetent or dishonest employee may manipulate the computed weight and send fraudulent messages (by manipulating the weights) during data disparity evaluation so as to obtain a larger weight. In other cases, the FL participant may upload a fraudulent model to affect the distribution of the global FL model in the current FL round in order to increase its weight in the next FL round \cite{blanchard2017machine}. Note that such an issue regarding fraudulent messages can be generalized to any FL system.

In order to address the aforementioned issues, we propose a secure weighted aggregation scheme to deal with both data disparity, data privacy, and dishonest participants (who send fraudulent messages to manipulate the computed weights) in FL systems. Our scheme leverages on HE and zero-knowledge proof (ZKP), and combines with a dropout-resilient secure aggregation scheme. To summarize, our contributions are the following:

\begin{itemize}
    \item By leveraging on HE, the proposed scheme achieves data disparity evaluation in a privacy-preserving manner.
    \item By devising a verification protocol based on ZKP, the proposed scheme is capable of guarding against fraudulent messages from being received by FL participants.
    \item The proposed scheme can be generalized to any data disparity evaluation scheme for secure weighted aggregation in FL.
    \item Compared to existing FL schemes, experimental results show that our scheme achieves secure weighted aggregation with an additional security guarantee against fraudulent messages with an affordable 1.2 times of runtime overheads and 1.3 times communication costs.
\end{itemize}

\textbf{Organisation of the paper:} The rest of the paper is organized as follows. In Section~\ref{sec:preliminaries}, we introduce the preliminaries and supporting protocols. In Section~\ref{sec:technical-intuition}, we describe the rationales for secure weighted aggregation together with a high-level overview, and the threat model. Then we give a detailed description of our proposed scheme, and discuss its privacy and security in Section~\ref{sec:our-protocol}. Experimental results are present in Section~\ref{sec:performance}. Finally, we discuss related works in Section~\ref{sec:related-works}, and conclude and outline several directions for future research in Section~\ref{sec:conclusion}. More background, detailed proofs and tolerance analysis are given in the Appendix.

\section{Preliminaries and Supporting Protocols} 
\label{sec:preliminaries}
This section briefly describes the preliminaries of federated learning, data disparity evaluation, secure aggregation, and related cryptographic protocols used in this work.

\subsection{Federated Learning}
\label{sed:federated-learning}
Federated learning scheme enables multiple users\footnote{We use the terms user and client interchangeably.} to jointly solve a machine learning problem with the coordination of a central server. The participants of FL are divided into two classes: a server $S$ and $K$ users $\mathcal{U}=\{u_1,u_2,\dots,u_K\}$ that each user $u_i\in \mathcal{U}$ holds a local dataset $\mathcal{D}_i=\{(x_j^i,y_j^i)|j=1,\dots,n_i\}$ with size $n_i$, where $x_j^i$ and $y_j^i$ are the $j$-th sample and corresponding label in $\mathcal{D}_i$ respectively. The size of the whole $K$ training data of the user is $n=\sum_{i=1}^K n_i$.

When the participants collaboratively training an FL model $\mathcal{M}$, the training objective can be represented as  $\arg\min\limits_{\mathcal{M}\in \mathbb{R}^d} f(\mathcal{M})$, where 
$$
f(\mathcal{M})=\sum_{i=1}^{K} \frac{n_{i}}{n} F_{i}(\mathcal{M}),\ F_{i}(\mathcal{M})=\frac{1}{n_{i}} \sum_{(\mathbf{x}_j,y_j) \in \mathcal{D}_{i}} f_k(x_j,y_j;\mathcal{M})$$
Here, $f_k(x_j,y_j;\mathcal{M})$ computes the loss of the model $\mathcal{M}$'s prediction on $j$-th example $(x_j,y_j)$.

A typical FL scheme works by repeating the following steps until training is stopped, determined by the model engineer monitoring the training process.

\begin{enumerate}
    \item User selection: In round $t$, the server $S$ selects a fraction of users $\mathcal{U}_t\subseteq \mathcal{U}$.
    \item Broadcast: The server $S$ broadcasts the current global model $M_t$ to the selected users.
    \item User computation: Each selected user $u_i\in \mathcal{U}_t$ trains a local model on their dataset $D_i$, and calculates the update $M_t^i$. 
    \item Model aggregation: The server $S$ collects local updates from users in the current round and computes global update according to an aggregation method, e.g., in \cite{mcmahan2017communication}, the local update of user is weighted by the percentage of their data size in the whole training data.
    \item Model update: The server $S$ updates the global model such that $\mathcal{M}_{t+1}\leftarrow \mathcal{M}_t^s$.
\end{enumerate}

\subsection{Logistic Regression}
\label{sec:logistic-regression}

Standard logistic regression is a linear algorithm with a non-linear transform on output for binary classification. Denote the input sample $\bm{x}$ as an $n$ dimensional vector $\bm{x}=(x_1, \dots, x_n)$, and the model parameter $\bm{\theta}$ as an $n+1$ dimensional vector $\bm{\theta}=(\theta_0, \theta_1, \dots, \theta_n)$. The output results $y$ which lies between 0 and 1 can be obtained by computing:
\begin{equation*}
    y=s(\bm{\theta \cdot x})=s(h_{\theta}(\bm{x}))=\frac{1}{1+e^{-\bm{\theta \cdot x}}}
\end{equation*}
where $h_{\theta}(\bm{x}) = \bm{\theta \cdot x}=\theta_0+\sum_{k=1}^n\theta_kx_k$ and $s(x)=\frac{1}{1+e^{-x}}$ is the sigmod function which is widely adopted in machine learning. 

The objective of training a logistic regression model using dataset $\mathcal{D}=\{(x_k,y_k)|k=1,\dots,|\mathcal{D}|\}$ is to optimize the following cost function:
$$
J(\bm{\theta})=\frac{1}{m}\left[\sum_{k=1}^{m}-y_{k} \log \left(h_{\theta}\left(\bm{x_{k}}\right)\right)+\left(1-y_{k}\right) \log \left(1-h_{\theta}\left(\bm{x_{k}}\right)\right)\right]$$
where $m=|\mathcal{D}|$ is the number of samples in $\mathcal{D}$. For round $i$, an optimal $\bm{\theta}$ in round $i+1$ can be computed using gradient descent algorithm by:
$$
\bm{\theta}^{i+1} \leftarrow \bm{\theta}^{i}-\eta \nabla J\left(\bm{\theta}^{i}\right), i \geq 0$$
where $\eta$ is the learning rate and $\nabla J$ is the gradient of $J$. The iterative optimization on $\bm{\theta}$ stops determined by the model engineer. 

\subsection{Data Disparity Evaluation}
\label{sec:data-disparity}
As data size and data quality vary across local datasets, an evaluation scheme for data disparity is essential in the FL system. In this work, we improve the label quality evaluation scheme proposed in \cite{chen2020focus} to illustrate how to construct an interactive evaluation scheme in a privacy-preserving manner. Note that this scheme can be generalized to any data disparity evaluation scheme for federated learning.

By maintaining a small set of benchmark dataset $\mathcal{D}_s$, the central server $S$ is allowed to quantify the credibility $C_{i}$ of each local dataset $\mathcal{D}_i$ by computing the mutual cross-entropy $E_{i}$. In specific, mutual cross-entropy $E_{i}$ evaluates both the performance of the global model $\mathcal{M}$ on the local dataset $\mathcal{D}_i$, i.e., $LL_{i}$, and the performance of the local model of the user $\mathcal{M}_i$ on the benchmark dataset $\mathcal{D}_s$, i.e., $LS_{i}$, which is given by:

\begin{equation*}
\begin{aligned}
E_{i} &=LS_{i}+LL_{i} \\
LS_{i} &=-\sum_{(x_s, y_s) \in \mathcal{D}_{s}} y_s \log P\left(y \mid x_s ; \mathcal{M}_{i}\right) \\
LL_{i} &=-\sum_{(x_u, y_u) \in \mathcal{D}_{i}} y_u \log P\left(y \mid x_u ; \mathcal{M}\right)
\end{aligned}
\end{equation*}

Then the weight of user $i$'s model can be defined as: 
$$
w_i=\frac{n_{i} C_i}{\sum_{j=1}^{K} n_{j} C_i} \text{ where } C_i=1-\frac{e^{\alpha E_i}}{\sum_{j=1}^{K} e^{\alpha E_j}}$$

Here $n_i$ is the size of the dataset $\mathcal{D}_i$, $K$ is the number of users, and $\alpha$ is a hyper-parameter for normalization. These weights can be used in subsequent weighted aggregation to obtain the new global model as $\sum_{i=1}^{K} w_i\mathcal{M}_{i}$.

\subsection{Secure Aggregation}
\label{sec:secure-aggregation}

Secure aggregation scheme \cite{bonawitz2017practical} enables a server $S$ to compute a sum $z$ of the locally trained models $\{\mathcal{M}_u\}_{u\in \mathcal{U}}$ from a set of users $\mathcal{U}$ in a private and dropout-robust manner, when the number of non-dropout users is greater than a threshold $t$. The improved version \cite{bell2020secure} further reduce the communication overheads by having each user communicates across only a subset of users rather than all the users. We denote such aggregation schemes as $ z \leftarrow \pi_{SecAgg}(\mathcal{U},\{\mathcal{M}_i\},t)$. Specifically, the illustrative scheme can be summarized as follows:

\textbf{Mask Generation $(\{R\}, \mathcal{U}_1) \leftarrow \pi_{MG}(t, \mathcal{U})$}: Each user $u\in \mathcal{U}$ generates a random matrix $R$ to mask the local model $\mathcal{M}$, and the set of alive users after mask generation is denoted as $\mathcal{U}_1$. In specifc, the matrix $R$ is generated using pseudorandom generator (PRG) \cite{blum1984generate} that $R=R_b+R_s$ which consists of (i) a self-chosen matrix $R_b=\text{PRG}(b_u)$ where $b_u$ is a self-chosen seed by the user $u$, and (ii) the sum of pairwise matrices with all other users $R_s=\text{PRG}(\sum_{u<v} s_{u, v}-\sum_{u>v} s_{v, u})$  (assume a total order on users) where $s_{u, v}$ is the seed that both user $u$ and $v$ agree on. To handle the dropout users, the self-chosen seed $b_u$ and the pairwise seed $s_{u, v}$, are secretly shared among all users using a $t$-out-of-$|\mathcal{U}|$ Shamir threshold secret sharing scheme.

\textbf{Masked Model Aggregation $(y, \mathcal{U}_2) \leftarrow \pi_{MMA}(\{\mathcal{M}\}$, $\{R\}$, $\mathcal{U}_1), t$}. Each user $u \in \mathcal{U}_1$ computes and uploads its masked local model $y_u=\mathcal{M} +R$. Then the server collects all masked models from the set of alive users, denoted as $\mathcal{U}_2\subseteq \mathcal{U}_1$, and computes the aggregation $y=\sum_{u\in \mathcal{U}_2}y_u$. Note that the number of alive users $|\mathcal{U}_2|$ should be larger than the threshold $t$, otherwise abort.

\textbf{Model Aggregation Recovery $z \leftarrow \pi_{MAR}(y, \mathcal{U}_1,\mathcal{U}_2,t)$}. Each user $u \in \mathcal{U}_2$ sends its Shamir shares of $b_u$ and each user $u \in \mathcal{U}_1$ sends its Shamir shares of $s_{u,v}$ to the server. After receiving the shares of $b_u$ and $s_{u,v}$ of which the number is greater than the threshold $t$, the server reconstructs corresponding $b_u$ and $s_{u,v}$ to recover the final aggregated result $z$ as:

$$
z=y-\sum_{u\in \mathcal{U}_2}\mathsf{PRG}(b_u)+\sum_{u\in \mathcal{U}_1/\mathcal{U}_2,v\in \mathcal{U}_2}\mathsf{PRG}(s_{u,v})$$

Note that for simplicity, we omit some cryptographic primitives such as the signature scheme, Diffie-Hellman key agreement scheme, and describe only the illustrative secure aggregation scheme for semi-honest setting. We refer interested readers to \cite{bonawitz2017practical,bell2020secure} for the details.

\subsection{Cryptographic Tools}

We now introduce the cryptographic building blocks used in our scheme.
\subsubsection{Paillier Crypto-system}
\label{sec:Paillier}
Paillier Crypto-system is a probabilistic asymmetric encryption scheme with additive homomorphic properties \cite{paillier1999public} consisting of a tuple of algorithms as follows:
\begin{itemize}
    \item $\mathsf{HE.KeyGen}(p,q)$: For large primes $p$ and $q$ that $gcd(pq,(p-1)(q-1)=1$, compute $n=p\cdot q$ and $\lambda=lcm(p-1,q-1)$. Then, select a random integer $g\in \mathbb{Z}_{n^2}^*$ that ensure $gcd(n, L(g^{\lambda} \bmod n^{2}))=1$, where function $L$ is defined as $L(x)=\frac{x-1}{n}$ and $\lambda=\varphi(n)$. Finally, output a key pair $(sk, pk)$ where the public key is $pk=(n,g)$ and the private key is $sk=(p,q)$ . 
    \item $\mathsf{HE.Enc}(pk,m,r)$: For message $m\in \mathbb{Z}_N$, taking the public key $pk$ and a random number $r\in \mathbb{Z}_N^*$, output a ciphertext $c=\mathsf{Enc}(m)=g^{m} r^{n} \quad(\bmod\ n^{2})$. 
    \item $\mathsf{HE.Dec}(sk, c)$: For a ciphertext $c < n^2$, taking private key $sk$, output the plaintext message $m=\mathsf{Dec}(c)=\frac{L\left(c^{\lambda}\left(\bmod n^{2}\right)\right)}{L\left(g^{\lambda}\left(\bmod n^{2}\right)\right)}$. 
\end{itemize}

For simplicity, we sometimes abuse the notation $Enc$ and $Dec$ instead of $\mathsf{HE.Enc}(pk,m,$ $r)$ and $\mathsf{HE.Dec}(sk, c)$ when the context is clear. Paillier supports homomorphic addition between two ciphertext such that $\mathsf{Enc}(m_1)\cdot \mathsf{Enc}(m_2)(\bmod\ n^{2})=\mathsf{Enc}(m_1+m_2$ $(\bmod\ n^{2}))$, and homomorphic multiplication between a plaintext and a ciphertext such that $(\mathsf{Enc}(m_{1}))^{m_{2}}$ $(\bmod\ {n^{2}})=\mathsf{Enc}(m_{1} m_{2}\ (\bmod\ n))$. We denote $\boxplus$ with the homomorphic addition between two ciphertexts, and $\boxtimes$ with the homomorphic multiplication between a ciphertext and a plaintext. In addition, Paillier crypto-system provides semantic security against chosen-plaintext attacks (IND-CPA) (see Appendix \ref{app:ind-cpa}) such that an adversary will be unable to distinguish pairs of ciphertexts based on the message they encrypt \cite{paillier1999public}.

\subsubsection{Zero Knowledge Proof of Plaintext Knowledge.}
\label{sec:zkpopk}
A zero-knowledge proof of plaintext knowledge (ZKPoPK) algorithm enables a prover to prove the knowledge of a plaintext $m$ of some ciphertext $C = Enc(m)$ to a verifier in a given public encryption scheme, without revealing anything about message $m$ \cite{goldreich1994definitions}. Following the Paillier-based ZKPoPK algorithm proposed in \cite{damgaard2010generalization}, the prover with a key pair $(pk,sk)$ can prove the knowledge of zero of a ciphertext $u=\mathsf{HE.Enc}(pk,0,v)\in \mathbb{Z}_{n^2}$ to the honest verifier. Algorithm \ref{alg:p-zkpopk} provides details of the Paillier-based ZKPoPK of Zero algorithm $b\leftarrow \mathsf{PZKPoPKoZ}(n,u,pk,sk,v)$. The correctness and security proof are given in \cite{damgaard2010generalization}.

\begin{algorithm}[htb] 
\begin{flushleft}
    \textbf{Public input:} $n,u,pk$.\\
    \textbf{Private input of the prover:} $sk$, $v\in \mathbb{Z}_n^*$, such that $u=\mathsf{HE.Enc}(pk,0,v)$.\\
    \textbf{Output:} $\beta$
\end{flushleft} 
\begin{algorithmic}[1]
\caption{Paillier-based ZKPoPK of Zero $b\leftarrow \mathsf{PZKPoPKoZ}(n,u,pk,sk,v)$}
\State The prover randomly chooses $r\in \mathbb{Z}_n^*$ and sends $a=\mathsf{HE.Enc}(pk,0,r)$ to the verifier.
\State The verifier chooses a random $e$ and sends it to the prover.
\State The prover sends $z=rv^e \bmod\ n$ to the verifier. 
\State The verifier checks that $u,a,z$ are prime to $n$ and that $\mathsf{HE.Enc}(pk,0,z)=au^e \bmod\ n^2$. If and only if these requirements satisfy, output $\beta=1$; otherwise, output $\beta=0$.
\label{alg:p-zkpopk}
\end{algorithmic}
\end{algorithm}

\section{Security and Privacy Requirements}
\label{sec:technical-intuition}

In general, the requirements of the proposed protocol are (i) both the client's data and locally trained models cannot be learned by any other party, and (ii) the server achieves its training objective using FL while the weights indicate the real competence and productivity of clients. In particular, we divide our proposed scheme into two phases (i) data disparity evaluation where the server and each user $u_i$ jointly calculate the weight $w_i$ based on the encrypted user's local dataset $\mathcal{D}_i$ and model $\mathcal{M}_i$, and (ii) weighted aggregation where the server aggregates the model of each user $\mathcal{M}_i$ according to its weights $w_i$.  The scheme is dropout-resilient that allows at most $n-t$ dropout clients where $t$ is the threshold of the Shamir secret sharing scheme.

\textbf{Threat model.} Particularly, our protocol is secure against a malicious adversary controlling the central server and up to $\lceil  \frac{n}{3} \rceil +1$ clients with the threshold $t>\lfloor  \frac{2n}{3} \rfloor+1$ for the Shamir secret sharing scheme (because these bound are derived from secure aggregation scheme \cite{bonawitz2017practical,bell2020secure} that we adapt to our proposed scheme, we refer interested readers to Appendix \ref{app:disc-tol} and \cite{bonawitz2017practical} for the details). Both the clients and the central server can send fraudulent messages to each other. Note that we assume authenticated channels and the signature schemes to ensure the confidentiality and integrity of messages sent by participants, i.e., no Sybil attack (see the system setting in Section \ref{sec:performance} for the details).

We consider only three types of active malicious behaviors, including (i) the user may publish the fraudulent weight (as we use HE for privately computing the weight, thus decryption is on the user-side, we will explain this in detail later), (ii) the user may upload the fraudulent locally trained model, and (iii) the server may give a different view of dropped users to the honest users during aggregation ( these operations are marked in underlined red in Protocol \ref{fig:our_protocol}). The reason for such consideration is based on the following natures:

\begin{itemize}
    \item The server is the party that intends to achieve the training objective using FL, and thus it would not deviate from the protocol before obtaining the real weights and models of users.
    \item The user will upload the encryption of real local dataset $\mathcal{D}_i$ and model $\mathcal{M}_i$ for data disparity evaluation. This is because the user does not know how to modify them to obtain a larger weight value.
    \item The user may publish a fraudulent weight with a larger value than the real weight for its benefit.
    \item After getting the real weight, the user may upload a fraudulent model to affect the distribution of the global FL model in the current FL round to increase its weight in the next FL round \cite{blanchard2017machine}.
    \item Since the list of dropped users is maintained by the server, the server may give a different view of dropped users to the honest users for malicious purposes, e.g., the server can claim an honest alive user to be dropped to ask other users to send its shares for recovering that user's model or gradient.
\end{itemize}

\textbf{High-level overview.} Before proceeding to the phase for data disparity evaluation, each user $u_i$ uploads its encrypted dataset $\mathcal{D}_i$ to the server $S$. Note that those encrypted datasets are uploaded only once at the very beginning of FL. After that, in the phase for data disparity evaluation, the protocol works as follows: (i) each user $u_i$ uses its local dataset $\mathcal{D}_i$ to train the model $\mathcal{M}_i$, and send encrypted model $\mathcal{M}_i$ to the server. Then (ii) the server $S$ calculates the weight $w_i$ following data disparity evaluation algorithm for each user $u_i$ using HE operations based on the received encrypted dataset $\mathcal{D}_i$ and encrypted model $\mathcal{M}_i$ from each user $u_i$. After that, (iii) the server $S$ sends encrypted weight $w_i$ to the user $u_i$, and finally (iv) the user decrypts the weight and publishes it to the server and other users. Note that the user holds the private key, and the decryption of weight is on the user-side. Thus, according to the threat model we discussed earlier, the server would not send fraudulent encrypted weight to each user, but the user may perform fraudulent decryption of the weight. Therefore, only the correctness of the decryption of weight on the user-side needs to be guaranteed. We adopt a variant of the ZKPoPK scheme to achieve such a goal (see Section \ref{sec:zkpopk}). Besides, we notice that most of the HE operations for data disparity evaluation on the server-side are only with a ciphertext, i.e., encrypted data or model of the user, and a plaintext, i.e., server's data or model. Thus, we can rely on the Paillier cryptosystem, which supports such HE operations instead of the less efficient Fully HE crypto-system (see Section \ref{sec:Paillier} and Section \ref{sec:ppdde}).

In the second phase for weighted aggregation, a secure aggregation scheme \cite{bonawitz2017practical,bell2020secure} provides a privacy guarantee against semi-honest and active adversaries. Note that the consistency check is adopted to deal with the issue regarding the central server assigning different views to honest clients, as mentioned earlier. Since we fully leverage this technique which is identical to the one used in the secure aggregation scheme, we describe it in Appendix \ref{app:con-check} and discuss the security of our proposed method in combination with consistency check, i.e., the maximum number of malicious adversaries allowed and the lower bound of the threshold $t$ for Shamir scheme, in Appendix \ref{app:disc-tol}. We follow the secure aggregation scheme in our work because it can be simply extended to secure weighted aggregation by uploading weighted models. However, an issue exists that users may upload fraudulent weighted models to the server for their benefits. For example, the clients can upload well-designed fraudulent models to affect the distribution of the global model of the current round to increase their weights in the next round \cite{blanchard2017machine}. Addressing such issues is not trivial as the solution may involve a complicated incentive mechanism and several protocols for security and privacy purposes. Luckily, since the real users' models in ciphertext have already been uploaded to the server in the first phase, and the weight of each user are public, relying on the ZKPoPK algorithm, the server can require each user to prove the plaintext knowledge of its weighted model without conveying any information apart from the fact that whether the user uploads the real weighted model or not. We note that following the secure aggregation scheme, each weighted model $w_i \mathcal{M}_i$ to be uploaded is masked by adding some randomness $R_i$. In this case, each user needs to prove the plaintext knowledge of its masked weighted model $w_i \mathcal{M}_i+R_i$ instead of the weighted model $w_i \mathcal{M}_i$, which requires the server to hold the ciphertext of the corresponding real masked weighted model. Since the encrypted masked weighted model can be obtained by adding the ciphertext of $R_i$ to the ciphertext of the real weighted model, the problem boils down to ensuring the user honestly uploading the ciphertext of $R_i$ to the server. In our scheme, we require users to upload these ciphertexts at the very beginning of each round of FL, such that users would not deviate from the protocol as they do not know how to design $R_i$ to increase their weights for benefits without knowing the information of their weights and locally trained models in the current round.

\section{Secure Weighted Aggregation Protocol}
\label{sec:our-protocol}
This section describes technical details of our privacy preserving scheme consisting for data disparity evaluation and weighted aggregation for federated learning.
\subsection{Data Disparity Evaluation}
\label{sec:ppdde}

For the phase of data disparity evaluation, we adopt a logistic regression model (see Section \ref{sec:logistic-regression}) and adapt the scheme proposed in \cite{chen2020focus} which enables the server to quantify the credibility of each user’s local dataset based on its label quality in a secure and privacy-preserving manner. 
Recall that the server $S$ maintains a benchmark dataset $\mathcal{D}_s=\{(x_s,y_s)|s=1,2,\dots,n_s\}$ and a global FL model $\mathcal{M}$, and the user $u_i$ holds a local dataset $\mathcal{D}_i=\{(x_u,y_u)|u=1,2,\dots,n_i\}$ and a locally trained model $\mathcal{M}_i$ as well as a Paillier key pair $(sk_i,pk_i)$. 

The privacy-preserving data disparity evaluation works as follows (see Section \ref{sec:data-disparity} for the version over plaintext): (i) each user $u_i$ uses its $pk_i$ to encrypt its dataset and locally trained model as $\mathsf{Enc}(\mathcal{D}_i)=\{(\mathsf{Enc}(x_u),\mathsf{Enc}(y_u))|$ $u=1,2,\dots,n_i\}$ and $\mathsf{Enc}(\mathcal{M}_i)$ respectively, and uploads them to the server. (ii) Then the server computes and sends the mutual cross entropy $E_{i}$ in ciphertext $\mathsf{Enc}(E_{i})$ to the user $i$ as:
\begin{equation}
\label{equ:entropy}
\begin{aligned}
\mathsf{Enc}(E_{i}) = \mathsf{Enc}(LS_{i})\boxplus \mathsf{Enc}(LL_{i})
\end{aligned}
\end{equation}
\begin{equation}
\label{equ:entropy_LS}
\begin{aligned}
\mathsf{Enc}(LS_{i}) = -\sum_{(x_s, y_s) \in \mathcal{D}_{s}} y_s \boxtimes \log P\left(y \mid x_s ; \mathsf{Enc}(\mathcal{M}^{i})\right)
\end{aligned}
\end{equation}
\begin{equation}
\label{equ:entropy_LL}
\begin{aligned}
\mathsf{Enc}(LL_{i}) = -\sum_{(x_u, y_u) \in \mathcal{D}_{i}} \mathsf{Enc}(y_u)\boxtimes \log P\left(y \mid \mathsf{Enc}(x_u) ; \mathcal{M}^{s}\right)
\end{aligned}
\end{equation}

As illustrated in Section \ref{sec:logistic-regression} and Section \ref{sec:data-disparity}, computation of $LS_{i}$ and $LL_{i}$ involves a sigmoid function $s(x)=\frac{1}{1+e^{-x}}$ and a logarithmic function $\text{log}(x)$ which cannot be directly evaluated using HE operations. A common method to address this issue is to replace these non-linear functions with polynomials. For instance, in terms of a good tradeoff between efficiency and accuracy, the sigmoid function can be approximated by a cubic polynomial $\sigma(x)=s_0+s_1x+s_2x^2+s_3x^3$, and thus the $y$ in Equation \ref{equ:entropy_LS} and \ref{equ:entropy_LL} can be calculated by $y=\sigma(l)=s_0+s_1l+s_2l^2+s_3l^3$ where $l=\bm{\theta \cdot x}$. Since in our scheme, either one of $\bm{\theta}$ and $\bm{x}$ is in ciphertext, the server can only obtain $\mathsf{Enc}(l)$, which means that the $n$-th power of $l$ cannot be calculated using Paillier as it does not support multiplication of ciphertext. Therefore, one more round of communication is needed to compute $y$. Specifically, the server chooses a random value $r \in 2^\kappa$ where $\kappa$ is the security parameter, to mask $l$ as $z=l+r$ and sends $\mathsf{Enc}(z)=\mathsf{Enc}(l) \boxplus \mathsf{Enc}(r)$ to the user. Then the user decrypts $\mathsf{Enc}(z)$, computes $\mathsf{Enc}(z^2)$ and $\mathsf{Enc}(\sigma(z))$ from $z$ and sends $(\mathsf{Enc}(z^2), \mathsf{Enc}(\sigma(z)))$ to the server.
Since 
\begin{equation*}
\begin{aligned}
    y=\sigma(l)=\sigma(z-r)=\sigma(z)-\sigma(r)+\left(s_{0}+3 s_{3} r^{3}\right)-3 s_{3} r z^{2}-\left(2 s_{2} r-3 s_{3} r^{2}\right) l
\end{aligned}
\end{equation*}
the $\mathsf{Enc}(y)$ can be calculated by 
\begin{equation}
\label{equ:sigmoid}
\begin{aligned}
    \mathsf{Enc}(y)=&\mathsf{Enc}(\sigma(z)) \boxplus \mathsf{Enc}(-r) \boxplus \mathsf{Enc}(s_0+3s_3r^3)\\ &\boxplus ((-3s_3r)\mathsf{Enc}(z^2)) \boxplus ((-2s_2r+3s_3r^2)\mathsf{Enc}(l))
\end{aligned}
\end{equation}
Note that since $l=\bm{\theta \cdot x}$ is masked by $r$, the user has no information of how to design fraudulent $(\mathsf{Enc}(z^2), \mathsf{Enc}(\sigma(z)))$ for its benefit. In addition, there is a difference between the computation of $\mathsf{Enc}(LS_{i})$ and $\mathsf{Enc}(LL_{i})$ as the former involves multiplication of plaintext while the latter involves multiplication of ciphertext. This means that Equation \ref{equ:entropy_LS} can be directly evaluated using Paillier and Equation \ref{equ:entropy_LL} needs one more round for computation. Specifically, let $\mathsf{Enc}(h) = \log P\left(y \mid \mathsf{Enc}(x_u) ; \mathcal{M}^{s}\right)$ in Equation \ref{equ:entropy_LL}, the server first sends $\mathsf{Enc}(h)$ to the user for decryption. To protect $h$ which may leak information of $x$ against the server, the user decrypts $\mathsf{Enc}(h)$ and masks $h$ using a randomly chosen value $r$ then sends $h+r$ to the server (here the user may send fraudulent $h+r$). Following that, the server computes $\mathsf{Enc}(y_uh+y_ur)=\mathsf{Enc}(y_u)(h+r)$. Assume that $\mathsf{Enc}(y_ur)$ has been uploaded to the server, the server is able to compute $\mathsf{Enc}(y_uh)=\mathsf{Enc}(y_uh+y_ur) \boxplus \mathsf{Enc}(-y_ur)$. Note that $r$ is secure even if the server holds $\mathsf{Enc}(y_u)$ and $\mathsf{Enc}(y_ur)$. Such security is provided by the IND-CPA property of Paillier (see Section \ref{sec:Paillier} and Appendix \ref{app:ind-cpa}). In addition, for computing binary cross-entropy after sigmod function, we can further simplify the polynomial replacement of their combined function using the similar above mentioned method.

After that, (iii) each user $i$ uses its private key $sk_i$ to decrypt $\mathsf{Enc}(E_{i})$ and publishes $E_{i}$. Finally (iv) all participants can compute the credibility $C_i$ and weight $w_i$ for each user $i$ following:
\begin{equation}
\label{equ:weight}
\begin{aligned}
w_{i} =\frac{n_{i} C_i}{\sum_{k=1}^{K} n_{k} C_i} = \frac{n_{i} e^{\alpha{}RE_i}}{\sum_{k} n_{k} e^{\alpha{}RE_k}} \text{, such that} \sum_{i} w_i=1
\end{aligned}
\end{equation}
where
$$C_i=\frac{e^{\alpha RE_i}}{\sum_k e^{\alpha RE_k}},\ RE_i=\frac{1}{E_i}$$
Here, Equation \ref{equ:weight} is the adapted version of that in \cite{chen2020focus} in order to handle dropout case by simply adjusting the denominator of $w_i$ in Equation \ref{equ:weight} on server-side during subsequent weighted aggregation (see Section \ref{sec:weighted_aggregation}).

As discussed in Section \ref{sec:technical-intuition}, since the user $i$ may deviate from the protocol for its benefit by publishing fraudulent decryption of $E_i$ in step (iii) especially when there is an incentive mechanism where reward provided by the server is distributed according to clients' weights, a verification scheme is needed to guarantee the correct decryption of $E_i$ on the user-side. To address this issue, we construct a variant of the ZKPoPK of Zero algorithm given in Section \ref{sec:zkpopk} to enable a user $u_i$ with a key pair $(pk_i,sk_i)$ to convince the server $S$ the fact that a message $m$ is the correct decryption of ciphertext $c$. To do so, given the message $m$, the server can compute $c'=c-\mathsf{Enc}(m)$ and then requires the user to prove the knowledge of zero of the ciphertext $c'$ following Algorithm \ref{alg:p-zkpopk}. The detail of this algorithm is given in Algorithm \ref{alg:p-zkp-m}. 

Assume that the server receives a message $E_i^{\prime}$ which is the decryption of $\mathsf{Enc}(E_i)$ from the user $u_i$. Following Algorithm \ref{alg:p-zkp-m}, the server $S$ and the user $u_i$ get $\beta\leftarrow \mathsf{PPoPK}(\mathsf{Enc}(E_i),$ $E_i^{\prime},$ $pk_i,sk_i)$. The server accepts $E_i^{\prime}$ as the plaintext of $\mathsf{Enc}(E_i)$ if and only if $\beta = 1$, which means the user $u_i$ honestly decrypts $\mathsf{Enc}(E_i)$. Otherwise, the server refuses the message $E_i^{\prime}$ and removes the user $u_i$ from the user set for aggregation. Similar approach can be adopted to guarantee the correct decryption of $h$ in step (ii). The correctness and security proof of the algorithm \ref{alg:p-zkpopk} can be found in Appendix \ref{app:p-zkpopk}.
\vspace{-0.4cm}
\begin{algorithm}[htb] 
\caption{Paillier-based PoPK $\beta\leftarrow \mathsf{PPoPK}(c,m,pk_i,sk_i)$}
\label{alg:p-zkp-m}
\begin{flushleft}
    \textbf{Public input:} $c,m,pk_i=(n,g)$.\\
    \textbf{Private input for the user $u_i$:} $sk_i=(p,q)$\\
    \textbf{Output:} $\beta$
\end{flushleft} 
\begin{algorithmic}[1]
\vspace{-0.2cm}
\State The server randomly chooses $r_s\in \mathbb{Z}_n^*$, and computes $c'=c-\mathsf{HE.Enc}(pk_i,m,r_s)(\bmod n^2)$. The server sends $c'$ to the user.\label{step:c'}
\State The user computes $r'=c'{}^{d}\bmod n$, where $d=n^{-1}\bmod \varphi(n)$. \label{step:r'}
\State The server and the user jointly follow Algorithm \ref{alg:p-zkpopk} to get $\beta\leftarrow \mathsf{PZKPoPKoZ}(n,c',pk_i,sk_i,r')$.
\end{algorithmic}
\end{algorithm}

\subsection{Weighted Aggregation}
\label{sec:weighted_aggregation}
\vspace{-0.1cm}
After each weight of the user is determined after the phase of data disparity evaluation, the server collects all locally trained models from users and aggregates them to update the global model $\mathcal{M}$ according to their corresponding weights:
$$
\setlength{\abovedisplayskip}{3pt}
\setlength{\belowdisplayskip}{3pt}
\mathcal{M}=\sum_{u_i\in \mathcal{U}^{\prime}}w_i\mathcal{M}_i,\text{ such that} \sum_{u_i\in \mathcal{U'}}w_i=1$$
where $w_i$ and $\mathcal{M}_i$ are the weight and local model of user $u_i$, and $\mathcal{U}^{\prime}$ is the selected user set in the current round of FL.

Let the numerator of $w_i$ in Equation \ref{equ:weight} be $\omega_i = n_i e^{\alpha RE_i}$, we can rewrite Equation \ref{equ:weight}:
\begin{equation}
\setlength{\abovedisplayskip}{3pt}
\setlength{\belowdisplayskip}{3pt}
\label{equ:omega}
    \mathcal{M}=\frac{1}{\sum_{u_j\in \mathcal{U'}} n_j e^{\alpha RE_j}}\sum_{u_i\in \mathcal{U}^{\prime}} \omega_i \mathcal{M}_i
\end{equation}

Therefore, the server can deal with dropout users by adjusting $\mathcal{U}^{\prime}$ and aggregates all alive models from users following the secure aggregation scheme \cite{bell2020secure,bonawitz2017practical}.


To prevent users from uploading fraudulent weighted models to the server, as we discussed in Section \ref{sec:technical-intuition}, a ZKPoPK based verification scheme is attached. The main idea of such a verification scheme is that if the server holds the encryption of each real masked weighted model $cy_{i}$, it can verify if the user uploads its real masked weighted model $y_i$ by checking the equality between $y_i$ and the decryption of $cy_i$. Specifically, relying on the verification Algorithm \ref{alg:p-zkp-m}, each user needs to prove the plaintext knowledge of its masked weighted model $\omega_i \mathcal{M}_i+R_i$ instead of the weighted model $\omega_i \mathcal{M}_i$, which requires the server to hold the ciphertext of the corresponding real masked weighted model. This means that each user $u_i\in \mathcal{U}^{\prime}$ has to upload its encrypted mask $\mathsf{Enc}(R_i)$ to the server, and the server computes the real encryption of each masked weighted model $y_i^{\prime}$ as:
\begin{equation}
\label{equ:cyi}
    y_i^{\prime}= \omega_i \boxtimes \mathsf{Enc}(\mathcal{M}_i) \boxplus \mathsf{Enc}(R_i)
\end{equation}

Note that the $\mathsf{Enc}(R_i)$ should be uploaded before the model aggregation step according to the analysis in Section \ref{sec:technical-intuition} to prevent the user from manipulating $R_i$ for their benefits. Besides, since $\omega_i$ is a public value and $\mathsf{Enc}(\mathcal{M}_i)$ has been uploaded during data disparity evaluation, the server can calculate $y_i^{\prime}$ using Pallier HE operations.

The detailed description of our secure weighted aggregation scheme for one FL round is given in Protocol \ref{fig:our_protocol}. Note that the \textbf{Setup} step is only performed once at the very beginning of federated learning. Step $0$ to step $4$ are performed in every FL round. 

\subsection{Discussion on Privacy and Security}
\label{sec:disc-sec}

The security of our proposed scheme can be derived from the reasonable assumptions for the FL system as described in Section \ref{sec:technical-intuition}, and the security of underlay cryptographic building blocks. Recall that the requirements of the proposed scheme are to ensure the privacy of the client's local dataset and model, and security against fraudulent messages (weight, model, and view of dropped user list). 

First, we give the discussion from the angle of privacy protection. Since both the client's dataset $\mathcal{D}_i$ and locally trained model $\mathcal{M}_i$ are encrypted using Paillier crypto-system in step Setup and step 1, then uploaded to the server, the server learns nothing from the ciphertext. After that, during computing the weight in step 1, HE operations are applied, and thus additions between ciphertexts do not leak any information. For computing a polynomial $f(l)$ which consists of multiplication between ciphertexts where the server keeps $ \text{Enc}(l)$, a trick is adopted as described in Section \ref{sec:data-disparity}. The main idea is to use a random $r$ to mask $l$ to enable the computation $f(l+r)$ over plaintext, then deduct the redundant terms of $f(l+r)$ to obtain the encryption of $f(l)$ using HE. The security of the trick is based on that the client does not know the value of $r$. Note that this trick enables the client to have the ciphertext of both $r$ and the $ar$ where $a$ is a public value. Privacy is still guaranteed thanks to the IND-CPA property (see Section \ref{sec:Paillier} and Appendix \ref{app:ind-cpa}) of Paillier crypto-system. The privacy of the public masked weighted model $w_i\mathcal{M}_i+R_i$ in step 3 and step 4 is protected by the mask $R_i$ randomly chosen by the client $u_i$, and the secure aggregation scheme. Therefore, there is no information leakage of both client's local dataset and model during the execution of our proposed protocol.

Next, we discuss from the angle of security against fraudulent messages. According to the assumptions described in Section \ref{sec:technical-intuition}, the server has the encryption of the real client's model $y_i$, real mutual cross-entropy $E_i$, and real mask $R_i$ after step setup, step 0, and step 1. Since zero-knowledge proof is adopted in step 2 to ensure the correct decryption of $E_i$, both the server and client can compute real $w_i$. Then the server can compute the encryption of real masked weighted model $w_i\mathcal{M}_i+R_i$, which can be used to ensure that the client uploads the plaintext of the real masked weighted model using zero-knowledge proof in step 3. The security regarding the view of dropped user list is provided by the consistency check technique (see Appendix \ref{app:con-check}) used in the secure aggregation scheme. Therefore, our scheme guarantees security against fraudulent messages, i.e., the correctness of both weight, model, and dropped user list.


\begin{figure}
\begin{protocol}
\begin{adjustbox}{minipage=0.94\linewidth,fbox={\fboxrule} 11pt}
\vspace{-0.2cm}
\centerline{\fontsize{10pt}{12pt}\selectfont{\bfseries{Protocol: Secure Weighted Aggregation \par}}}
\vspace{0.1cm}
\textbf{{Private inputs:}} Each user $u_i$ has a dataset $\mathcal{D}_i=\{(x_u,y_u)|u=1,2,\dots,n_i\}$, a locally trained model $\mathcal{M}_i$, and a Paillier key pair $(sk_i,pk_i)$. Sever has a benchmark dataset $\mathcal{D}_s=\{(x_s,y_s)|u=1,2,\dots,n_s\}$.\\
\textbf{{Public inputs:}} Each user $u_i$'s Paillier public key $pk_i$. The global FL model $\mathcal{M}$, the selected user set $\mathcal{U'}\subseteq \mathcal{U}$, and a threshold $t$ for Shamir secret sharing scheme.\\
\textbf{{Outputs:}} The weighted aggregation of all local models from the alive users $\sum_{i}w_i\mathcal{M}_i$.\\
\hrule
\vspace{-0.1cm}
\begin{itemize}
    \item \textbf{{Setup - Uploading Encrypted Dataset (performed only once):}}\\
    User $u_i$:
    \begin{enumerate}
    \item Uses $pk_i$ to encrypt its dataset $\mathcal{D}_i$ as $\mathsf{Enc}(\mathcal{D}_i)=\{(\mathsf{Enc}(x_u),\mathsf{Enc}(y_u))|u=1,2,\dots,n_i\}$. Then sends $\mathsf{Enc}(\mathcal{D}_i)$ to the server.
    \end{enumerate}
    Server: 
    \begin{enumerate}
    \item Receives $\mathsf{Enc}(\mathcal{D}_i)$ from user $u_i\in \mathcal{U}$.
    \end{enumerate}

    \item \textbf{{Step 0 - Initialization (Init) :}}\\
    User $u_i$:
    \begin{enumerate}
        \item Given the user set $\mathcal{U}$ and the threshold $t$, computes the matrix $R_i$ according to $\pi_{MG}$, such that $(\{R_i\}_{u_i\in \mathcal{U}_1}, \mathcal{U}_1) \leftarrow \pi_{MG}(\mathcal{U}, t)$, where the output $\mathcal{U}_1$ is the set of alive users. Then ecrypts $R_i$ and send $\mathsf{Enc}(R_i)$ to the server.
    \end{enumerate}
    Server:
    \begin{enumerate}
        \item Collects $\mathsf{Enc}(R_i)$ from the set of alive users denoted as $\mathcal{U}_2\subseteq \mathcal{U}_1$, and publishes $\mathcal{U}_2$.
    \end{enumerate}
    \item \textbf{{Step 1 - Computing Mutual Cross-entropy (CompE) :}}\\
    User $u_i$:
    \begin{enumerate}
    \item Uses local dataset $\mathcal{D}_i$ to train local model $\mathcal{M}_i$, and sends $\mathsf{Enc}(\mathcal{M}_i)$ to the server.
    \end{enumerate}
    Server:
    \begin{enumerate}
    \item Receives all $\mathsf{Enc}(\mathcal{M}_i)$ from the set of alive users denoted as $\mathcal{U}_3 \subseteq \mathcal{U}_2$.
    \item For each user $u_i\in \mathcal{U}_3$, computes the mutual cross-entropy $\mathsf{Enc}(E_i)$ of the user $u_i$ based on Equation~\ref{equ:entropy},~\ref{equ:entropy_LS} and \ref{equ:entropy_LL}, and then sents it to user $u_i$.
    \end{enumerate}

    \item \textbf{{Step 2 - Proof of Knowledge of Mutual Cross-entropy (PoKE) :}}
    \begin{enumerate}
        \item \textcolor{red}{\underline{The user $u_i$ decrypts $\mathsf{Enc}(E_i)$ to get $E_i^{\prime}$ and sends it to the server.}}
        \item The user $u_i$ and the server jointly follow Algorithm \ref{alg:p-zkp-m} to get $\beta_E \leftarrow \mathsf{PPoPK}(\mathsf{Enc}(E_i),E_i^{\prime},$ $pk_i,sk_i)$. 
        \item If $\beta_E = 0$, the server removes $u_i$ from $\mathcal{U}_3$. Denote the rest of alive users as $\mathcal{U}_4$. Else both the user $u_i$ and the server compute $\omega_i$ in Equation \ref{equ:omega}.
    \end{enumerate}
    \item \textbf{{Step 3 - Proof of Knowledge of Masked Weighted Model (PoKM) :}}\\
    User $u_i$:
    \begin{enumerate}
    \item \textcolor{red}{\underline{Uploads $y_i^{\prime}=w_i\mathcal{M}_i+R_i$ to the server.}}
    \end{enumerate}
    Server:
    \begin{enumerate}
        \item Receives $y_i^{\prime}$ from the set of current alive users denoted as $\mathcal{U}_5$. For each user $u_i\in \mathcal{U}_5$, computes $y_i= \omega_i \boxtimes \mathsf{Enc}(\mathcal{M}_i) \boxplus \mathsf{Enc}(R_i)$ according to Equation \ref{equ:cyi}.
        \item Cooperate with user $u_i$ following Algorithm \ref{alg:p-zkp-m} to get $\beta_M \leftarrow \mathsf{PPoPK}(y_i,y_i^{\prime},$ $pk_i,sk_i)$
        \item If $\beta_M = 0$, the server removes $u_i$ from $\mathcal{U}_5$, denote the rest of alive users as $\mathcal{U}_6$.
        \item Computes the aggregation of masked weighted models such that $y=\sum_{u_i\in \mathcal{U}_6}y_{u_i}$ following $\pi_{MMA}$ protocol.
    \end{enumerate}
    
    \item \textbf{{Step 4 - Weighted Aggregation (WAgg) :}}\\
    Server: 
    \begin{enumerate}
        \item \textcolor{red}{\underline{Publishes the alive user set $\mathcal{U}_6$}}, and cooperates with clients $u_i\in \mathcal{U}_6$ to jointly compute and output the final weighted aggregation $\mathcal{M}$ according to the protocol $\pi_{MAR}$, such that $\mathcal{M} \leftarrow \pi_{MAR}(y, \mathcal{U}_1,\mathcal{U}_6,t)$.
    \end{enumerate}

\end{itemize}
\end{adjustbox}
\caption{Detailed description of proposed aggregation protocol. \textcolor{red}{\underline{The participants may }} \textcolor{red}{\underline{deviate from the protocol by sending fraudulent messages in the red underlined operations.}}}
\label{fig:our_protocol}
\end{protocol}
\end{figure}

\section{Performance Analysis}
\label{sec:performance}
\textbf{System setting.} Our prototype is tested over two Linux workstations with an Intel Xeon E5-2603v4 CPU (1.70GHz) and 64 GB of RAM, running CentOS 7 in the same region. The average latency of the network is 0.207 ms, and the average bandwidth is 1.25 GB/s. The central server runs in multi-thread on one workstation, while the users are running in parallel in the other workstation. In our experiments, original data is represented in fixed-point with 44 bits, including 17 bits for the integer part and 27 bits for the fractional part. The ring size $N$ for Paillier is set to be a 1024-bit integer, and the security parameter $\kappa$ is set to be $80$. We use AES-GCM with 128-bit keys for authenticated channels, and adopt the state-of-the-art secure aggregation scheme (Secagg+ \cite{bell2020secure}) for dropout-resilient aggregation. The performance is evaluated on Credit Card Clients dataset from the UCI ML repository \cite{dua2017uci}, from which each user randomly chooses samples with the same size to construct its local dataset. The number of samples in the benchmark dataset on the server-side is set to be 500.

As seen in Figure \ref{fig:ppopk_run-time} and \ref{fig:ppopk_communication_cost}, the run time and communication cost for $\mathsf{PPoPK}$ increase almost linearly with the dimensions of input $m$. Figure \ref{fig:dde_run-time} and \ref{fig:dde_communication_cost} show the run time and communication cost of a client in the first phase, i.e., data disparity evaluation, including CompE and PoKE. Both the run time and communication costs increase linearly with the number of samples in the local dataset. Figure \ref{fig:total_run-time} and \ref{fig:total_communication_cost} give the performance on total run time and communication cost of the second phase, i.e., weighted aggregation, including PoKM and Wagg, compared with the model aggregation counterpart in secure aggregation scheme (Secagg) \cite{bonawitz2017practical} and the improved secure aggregation scheme (Secagg+) \cite{bell2020secure}. Note that the dropped users in our scheme include both the users that do not pass PoKM and the users that drop out of the system. The results show that our scheme provides an additional security guarantee against fraudulent messages with affordable overheads compared to the previous works. In particular, our scheme achieves around 1.2 times in run time and 1.3 times in communication cost upon the state-of-the-art secure aggregation scheme Secagg+, which indicates the practicality of our scheme. Table \ref{tab:total_table} lists the run time and communication costs of the user and server for each step in one FL round with different dropout rates in the first phase for data disparity evaluation $R_1$, and the second phase for weighted aggregation $R_2$. Note that the server's run time is the run time of the whole FL system, and we treat the users who do not pass PoKE and PoKM as dropout users as demonstrated in Protocol \ref{fig:our_protocol}. In specific, we consider the "worst-case" dropout scenario where the users drop out during PoKE in the first phase as the server has already completed the calculation of the weights, or PoKM in the second phase as the users have already uploaded their weighted models. We can observe that dropout users cause a significant increasement in the total run time because of the high cost of dealing with dropout users in the underlying secure aggregation scheme.

\vspace{-0.8cm}
\begin{figure}[H]
\centering
\addtolength{\subfigcapskip}{-0.13in}
\subfigure[Run time evaluation]{
\begin{minipage}[t]{0.45\linewidth}
\centering
\includegraphics[width=1\linewidth]{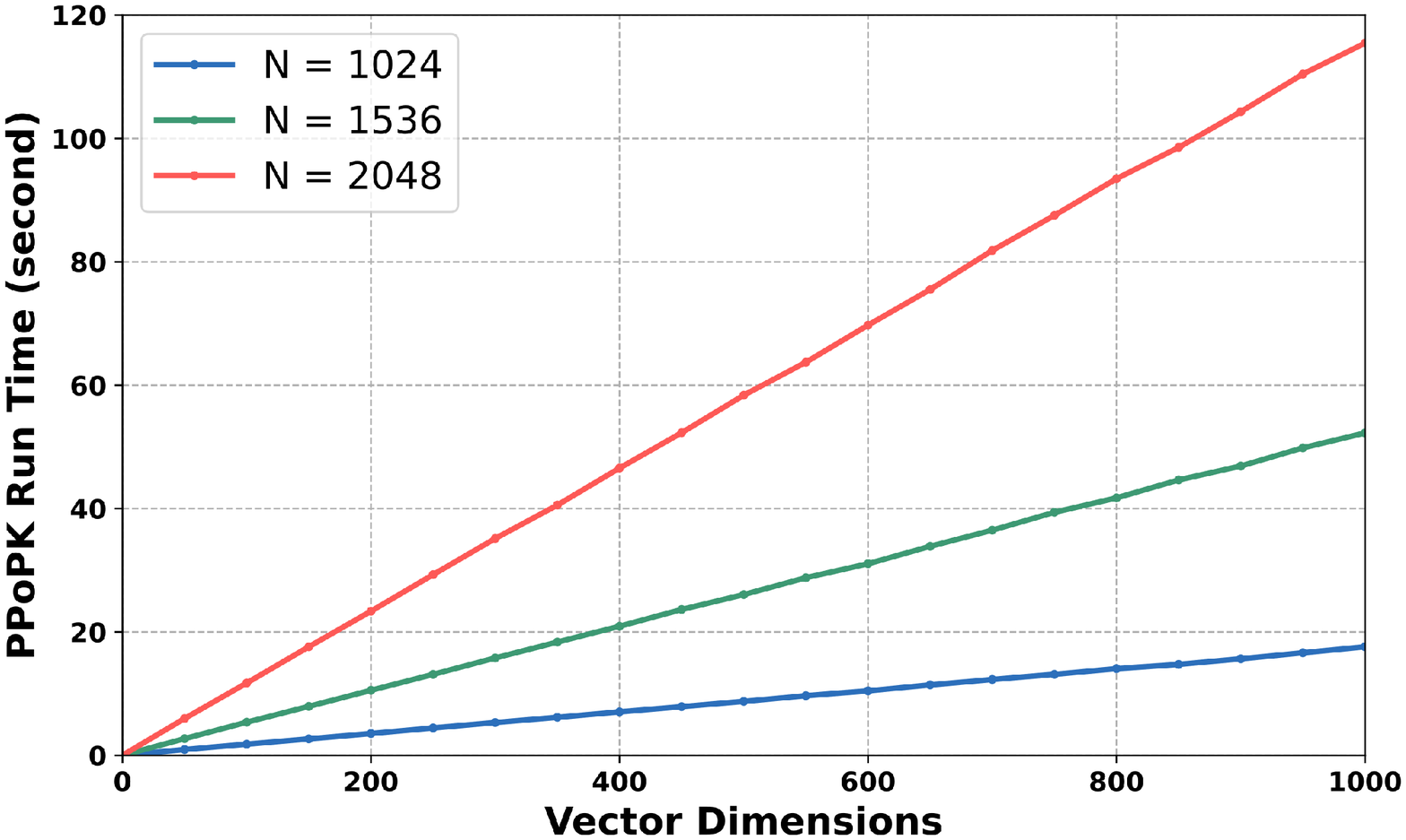}
\label{fig:ppopk_run-time}
\end{minipage}
}
\subfigure[Communication cost evaluation]{
\begin{minipage}[t]{0.45\linewidth}
\centering
\includegraphics[width=1\linewidth]{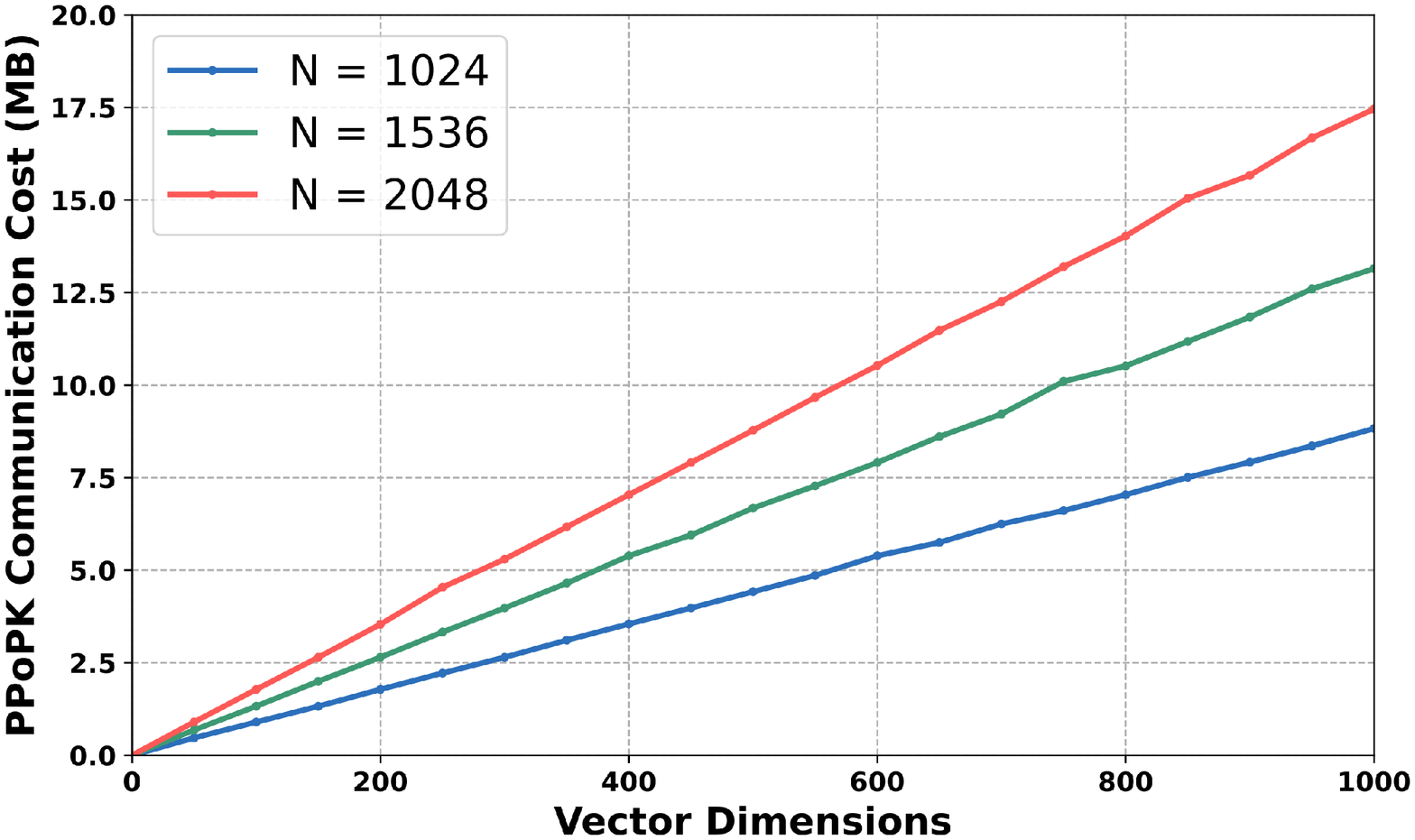}
\label{fig:ppopk_communication_cost}
\end{minipage}
}
\centering
\vspace{-0.3cm}
\caption{The run time and communication cost of algorithm $\mathsf{PPoPK}(c,m,$ $pk_i,sk_i)$ for the user $u_i$ and the server $S$ as the dimension of input vector $m$ increases with different ring size $N$ for Paillier crypto-system.}
\end{figure}

\vspace{-2cm}
\begin{figure}[H]
\centering
\addtolength{\subfigcapskip}{-0.13in}
\subfigure[Run time evaluation]{
\begin{minipage}[t]{0.45\linewidth}
\centering
\includegraphics[width=1\linewidth]{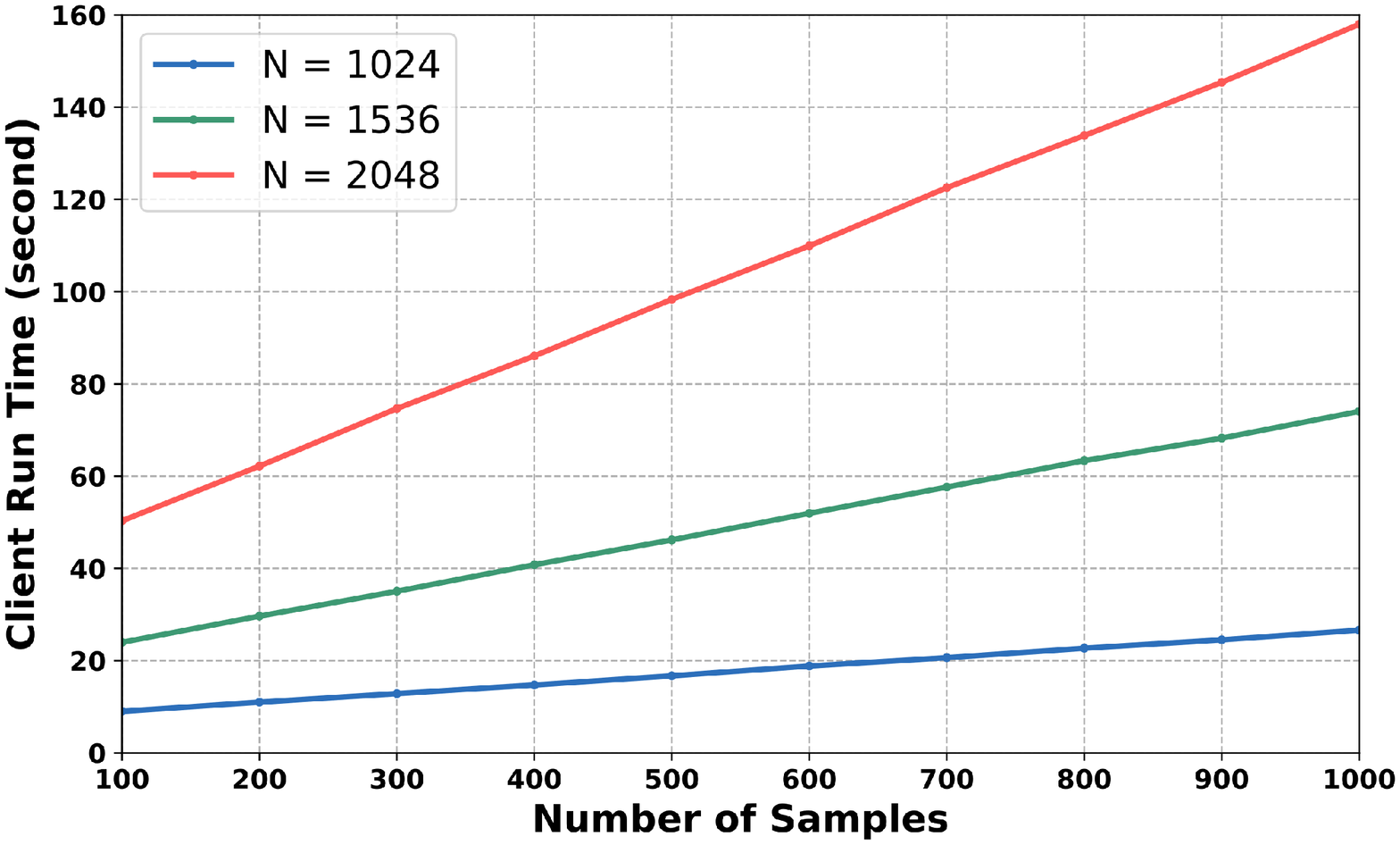}
\label{fig:dde_run-time}
\end{minipage}%
}
\subfigure[Communication cost evaluation]{
\begin{minipage}[t]{0.45\linewidth}
\centering
\includegraphics[width=1\linewidth]{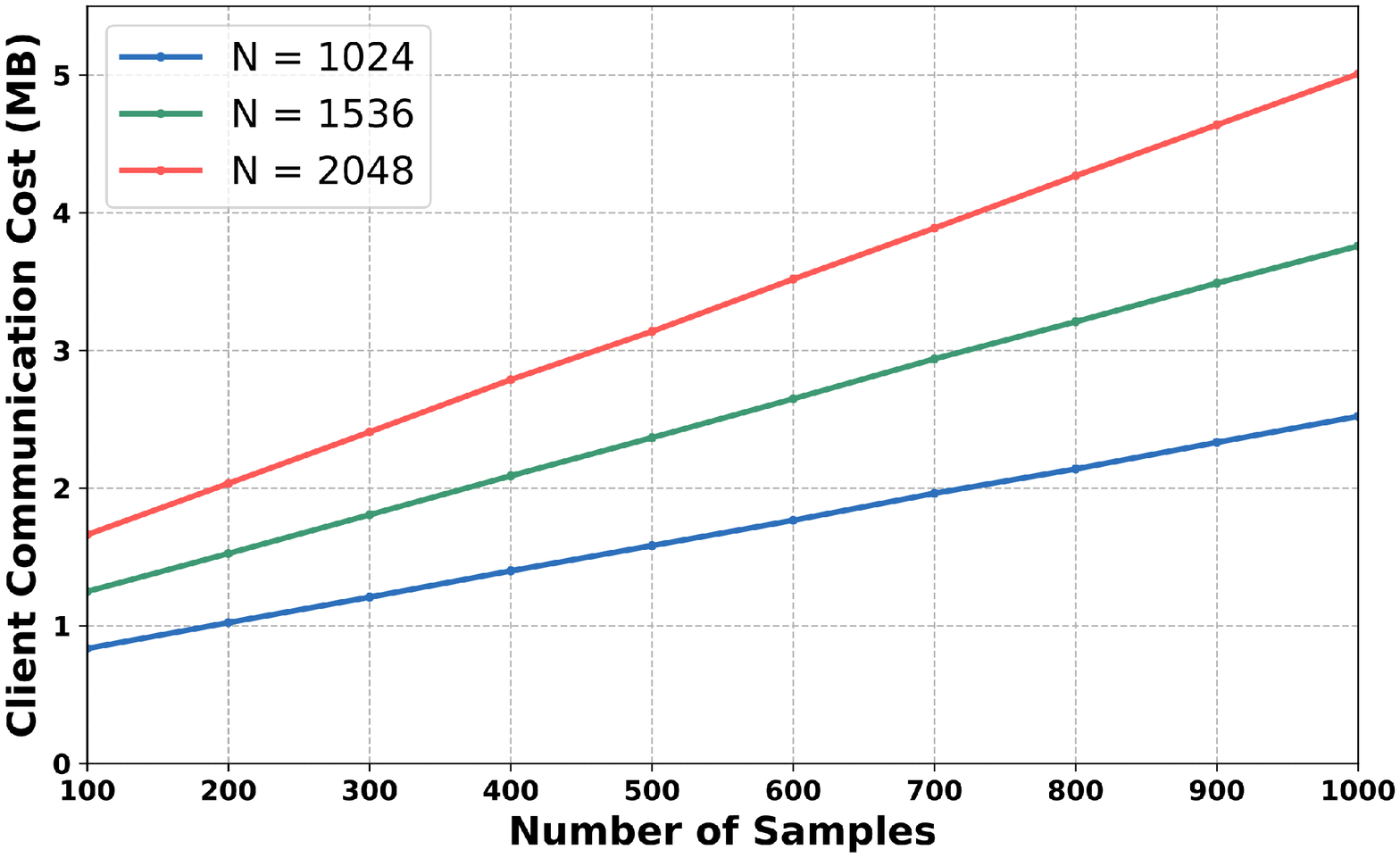}
\label{fig:dde_communication_cost}
\end{minipage}%
}%
\vspace{-0.1cm}
\centering
\caption{The run time and communication cost per client for data disparity evaluation as the number of samples in local dataset increases with different ring size $N$ for Paillier crypto-system.}
\centering
\subfigure[Run time evaluation]{
\begin{minipage}[t]{0.45\linewidth}
\centering
\includegraphics[width=1\linewidth]{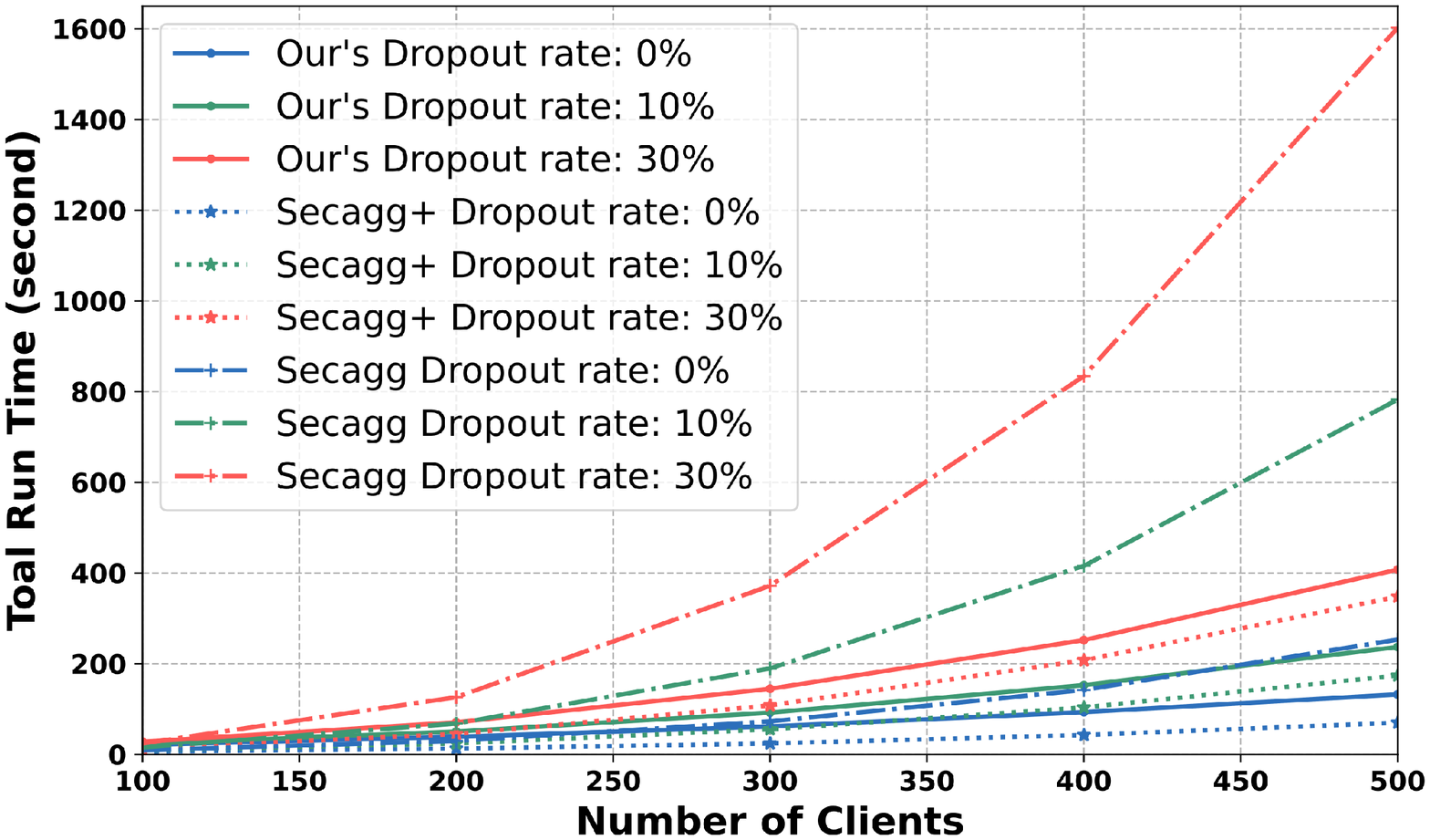}
\label{fig:total_run-time}
\end{minipage}%
}
\subfigure[Communication cost evaluation]{
\begin{minipage}[t]{0.45\linewidth}
\centering
\includegraphics[width=1\linewidth]{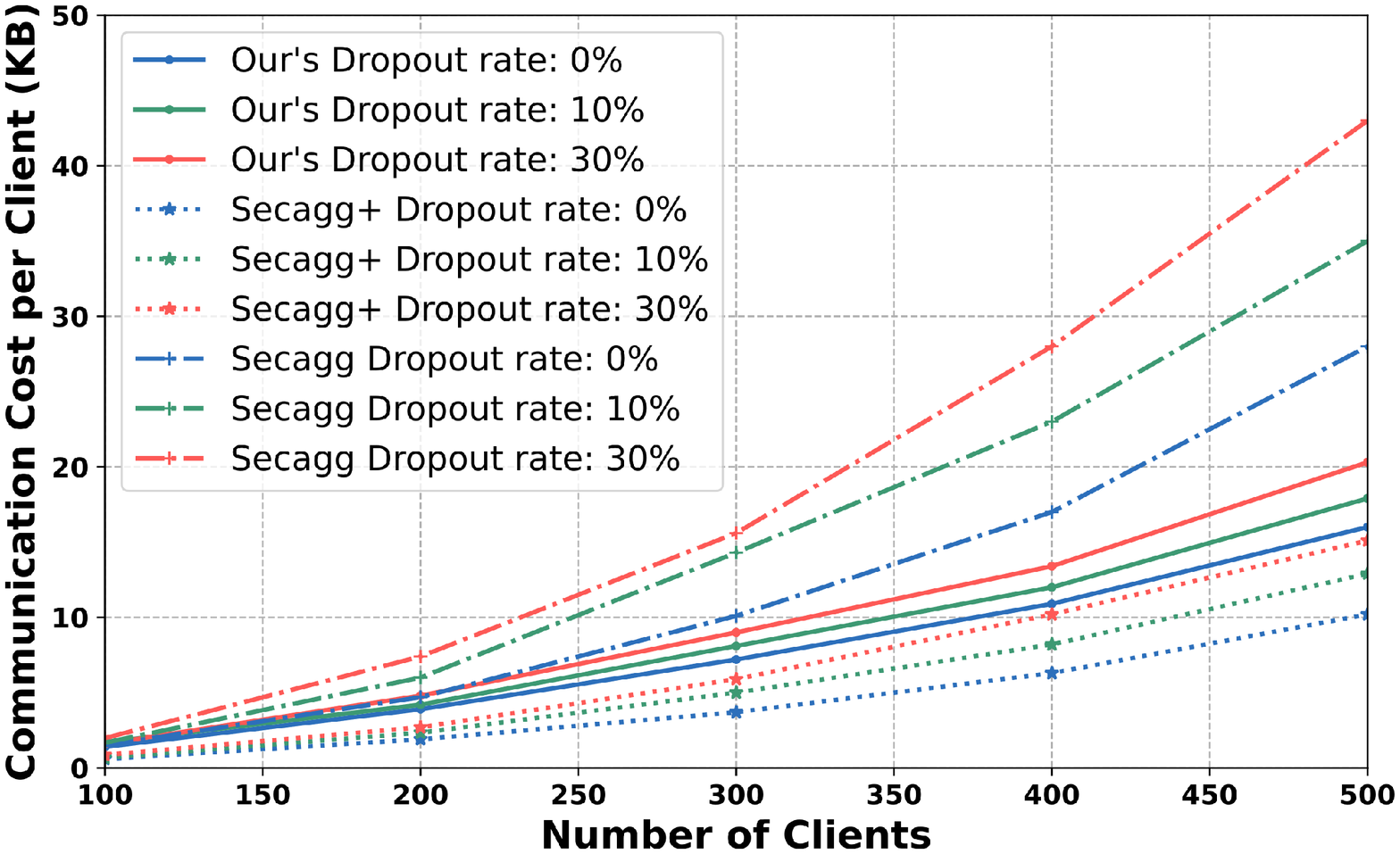}
\label{fig:total_communication_cost}
\end{minipage}%
}%
\vspace{-0.1cm}
\centering
\caption{The total run time and communication cost of weighted aggregation including PoKM and WAgg for one FL round, compared with standard Secagg and Secagg+ scheme.}
\end{figure}

\vspace{-0.5cm}

\begin{table}[htbp]
\centering
\caption{The run time (s)/communication costs (MB) of the user and server for each step in one FL round. The number of clients is 500. $R_1$ and $R_2$ are the fractions of dropout users in the first phase and the second phase respectively.}
\begin{tabular}{|c|c|c|c|c|c|c|c|c|}
\hline
\multirow{2}{*}{} & \multirow{2}{*}{$R_1$} & \multirow{2}{*}{$R_2$} & \multicolumn{1}{c|}{\multirow{2}{*}{Init}} & \multicolumn{2}{c|}{The First Phase}            & \multicolumn{2}{c|}{The Second Phase}                  & \multicolumn{1}{c|}{\multirow{2}{*}{Total}} \\ \cline{5-8}
                  &                        &                        & \multicolumn{1}{c|}{}                      & \multicolumn{1}{c|}{ComE} & \multicolumn{1}{c|}{PoKE} & \multicolumn{1}{c|}{PoKM} & \multicolumn{1}{c|}{WAgg} & \multicolumn{1}{c|}{}                       \\ \hline
user              & 0                      & 0                      & 0.932/0.48                                   & 5.648/1.02                 & 0.008/0.01                    & 0.023/0.01                   & 0.039/0.02                   & 6.650/1.54                                   \\ \hline
~server~            & 0                      & 0                      & 1.223/240                                   & 10.972/512                 & 0.043/3                      & 0.130/6                     & 67.310/10                  & 79.678/771                                   \\ \hline
server            & 10\%                   & 0                      & ~1.207/240~                                   & ~10.917/513~                 & 0.036/3                      & 0.118/6                     & ~129.785/11~                 & ~142.063/773~                                  \\
server            & 0                      & 10\%                   & 1.215/240                                   & 10.973/513                 & 0.037/4                      & 0.127/5                     & 130.583/12                 & 142.935/774                                  \\
server            & 5\%                    & 5\%                    & 1.211/240                                   & 10.938/513                 & 0.038/3                      & 0.125/5                     & 132.314/13                 & 144.626/774                                  \\ \hline
server            & 30\%                   & 0                      & 1.202/241                                   & 10.908/512                 & 0.034/3                      & 0.095/4                      & 254.237/14                 & 266.476/774                                  \\
server            & 0                      & 30\%                   & 1.204/240                                   & 10.927/512                 & 0.037/4                      & 0.120/4                     & 255.142/15                 & 267.430/775                                  \\
server            & 15\%                   & 15\%                   & 1.204/240                                   & 10.912/513                 & 0.038/3                      & 0.105/3                     & 256.895/15                 & 269.154/775                                  \\ \hline
\end{tabular}
\label{tab:total_table}
\end{table}


\section{Discussion on Related Works}
\label{sec:related-works}
There have been several schemes proposed to perform privacy-preserving and dropout-resilient aggregation. The main idea of these schemes is to integrate FL with privacy-preserving techniques such as DP, HE, and MPC. For example, the authors in \cite{mcmahan2018learning} propose a DP-based FL scheme that a level of noise is added to each client's locally trained model before aggregation, and thus to hide clients' data distribution. However, there is a trade-off between privacy loss and model performance. 

For the HE-based aggregation scheme, in every round of FL, each client's locally trained model is encrypted before uploading. Then the central server aggregates the clients' model in ciphertext using HE operations and publishes the result for decryption. In this scheme, the private key is distributed among all participants relying on a threshold HE crypto-system, e.g., \cite{rastogi2010differentially}, and several parties that more than a threshold number can jointly decrypt an encrypted message.
Although the HE-based scheme provides an exact solution for privacy-preserving aggregation compared with the DP-based scheme, it may incur infeasible cost overhead for a large-scale FL system as the underlying threshold crypto-system requires all participants to contribute to several expensive protocols. Besides, this scheme cannot deal with clients dropping out. 

A more efficient and dropout-robust aggregation scheme providing an exact solution is based on MPC, as proposed in \cite{bonawitz2017practical}. In this scheme, each client's locally trained model is masked that the seeds for generating masks are secretly shared among all clients using a threshold secret sharing scheme to handle the dropout users. The improved version \cite{bell2020secure} upon \cite{bonawitz2017practical} replaces the complete communication graph in \cite{bonawitz2017practical} with a $k$-regular graph to further reduce the communication overheads. Another variant TurboAgg \cite{so2021turbo} divides clients into multiple groups and follows a multi-group circular structure for communication. FastSecAgg \cite{kadhe2020fastsecagg} substitutes threshold secret sharing with FFT-based multi-secret sharing scheme. NIKE \cite{mandal2018nike} adopts a non-interactive key exchange protocol relying on non-colluding servers. 

Note that for a large-scale FL system, executions of threshold secret sharing operations are more cost-effective than that of a threshold HE crypto-system. Nevertheless, all of the schemes mentioned above focus on achieving FL from a privacy-preserving perspective, issues still exist that both the central server and clients may send fraudulent messages to each other for their benefits.

\section{Conclusion}
\label{sec:conclusion}
In this paper, we proposed a privacy-enhanced FL scheme for supporting secure weighted aggregation. Our scheme is able to deal with both data disparity, data privacy, and dishonest participants (who send fraudulent messages to manipulate the computed weights) in FL systems. The proposed scheme leverages on HE and zero-knowledge proof (ZKP), and combines with a dropout-resilient secure aggregation scheme. Experimental results show that our scheme is practical and secure. Compared to existing FL approaches, our scheme achieves secure weighted aggregation with an additional security guarantee against fraudulent messages with affordable runtime overheads and communication costs.

For future research investigation, we will substitute the Paillier cryptosystem with other HE schemes such as CKKS \cite{cheon2017homomorphic} in order to improve the efficiency of data disparity evaluation. Several other attacks such as membership inference attack \cite{shokri2017membership} and model inversion attack \cite{fredrikson2015model} will be taken into account for a more general threat model. Besides, other distributed machine learning schemes such as split learning \cite{gupta2018distributed} can be integrated to further reduce the computation overheads on the client-side to make FL more friendly to resource-constrained edge devices.

\bibliographystyle{splncs04}
\bibliography{mybibliography}

\appendix
\section{Proof of Algorithm \ref{alg:p-zkpopk}}
\label{app:p-zkpopk}

The correctness is proved as follows. If the message $m$ provided by the user is the decryption of $c$, $c'=c-\mathsf{HE.Enc}(pk_i,m,r_s)(\bmod n^2)$ should be the ciphertext of zero with respect to $r'$ that $c'\bmod n^2=(c')^{n\cdot d} \bmod n^2=(r')^n\bmod n^2= \mathsf{HE.Enc}(pk_i,0,r')$. Therefore, the output $\beta$ of the algorithm $\mathsf{PZKPoPKoZ}$ in step 3 is 1. Similarly, if $m$ is not the decryption of $c$, the algorithm $\mathsf{PZKPoPKoZ}$ outputs $\beta=0$. 

\begin{theorem}[Security against malicious users]
\label{thm:sec}
The Paillier-based Proof of Plaintext Knowledge is secure against malicious users. If there exists a probabilistic polynomial time (PPT) adversary $\mathcal{A}$ with Paillier key pair $(sk,pk)$ providing a fraudulent plaintext $m'$ as the decryption of $c$ that $\mathsf{HE.Dec}(sk,c)\neq m'$. The probability of the server accepting the fraudulent decryption $\mathsf{Pr}[1\leftarrow \mathsf{PPoPK}_{\mathcal{A}}(c,m',pk,sk)] \leqslant 2^{-k/2}$ where $k$ is the bit length of $n$ as the Paillier public key.
\end{theorem}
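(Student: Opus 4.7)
The plan is to reduce security of $\mathsf{PPoPK}$ to the soundness of the underlying Paillier-based ZKPoPK of Zero (Algorithm~\ref{alg:p-zkpopk}), which was already proven sound in \cite{damgaard2010generalization}. First I would observe that if the adversary $\mathcal{A}$ outputs a fraudulent plaintext $m' \neq \mathsf{HE.Dec}(sk, c)$, then the ciphertext $c' = c - \mathsf{HE.Enc}(pk, m', r_s) \bmod n^{2}$ computed by the server in step~\ref{step:c'} of Algorithm~\ref{alg:p-zkp-m} is provably not an encryption of zero: by the homomorphic property of Paillier and correctness of decryption, $\mathsf{HE.Dec}(sk, c') = \mathsf{HE.Dec}(sk, c) - m' \not\equiv 0 \pmod{n}$. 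Hence $\mathcal{A}$ is in the position of a malicious prover trying to convince the verifier that a non-zero-encrypting ciphertext $c'$ encrypts zero.

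Next I would invoke the special-soundness property of the $\Sigma$-protocol in Algorithm~\ref{alg:p-zkpopk}. The standard argument is as follows: suppose $\mathcal{A}$ produces an accepting transcript $(a, e, z)$ with probability non-negligibly larger than $2^{-k/2}$. Then by a rewinding/forking argument we can extract two accepting transcripts $(a, e_1, z_1)$ and $(a, e_2, z_2)$ with $e_1 \neq e_2$. Combining the verification equations $\mathsf{HE.Enc}(pk, 0, z_i) = a \cdot c'^{\,e_i} \bmod n^{2}$ for $i = 1, 2$ yields $(z_1 / z_2)^{n} \equiv c'^{\,e_1 - e_2} \pmod{n^{2}}$. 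If $\gcd(e_1 - e_2, n) = 1$, Bezout's identity together with this relation lets us compute an $n$-th root of $c'$, which contradicts $c'$ not being an encryption of zero.

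The quantitative $2^{-k/2}$ bound would then come from the size of the challenge space modulo the smallest prime factor of $n$. Concretely, for balanced $k$-bit $n = pq$ with $p, q \approx 2^{k/2}$, the only obstruction to the above extraction is that $e_1 - e_2$ shares a factor with $n$, which happens for a fraction at most $1/\min(p,q) \leq 2^{-k/2}$ of challenge pairs. Hence a cheating adversary succeeds with probability at most $2^{-k/2}$, since a higher success probability would, via the extractor, yield a non-trivial factor of $n$ and thus an $n$-th root of the non-zero-encrypting $c'$.

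The main obstacle I anticipate is making the extraction step fully rigorous under the precise challenge distribution used in Algorithm~\ref{alg:p-zkpopk} (which the excerpt specifies only as ``a random $e$''). In particular, one must argue that rewinding produces two accepting transcripts with sufficient probability and that the resulting $e_1 - e_2$ is coprime to $n$ with probability at least $1 - 2^{-k/2}$; this is a textbook argument for Paillier-based $\Sigma$-protocols but relies on the challenge space being large enough relative to $\min(p,q)$. Once that is handled, combining this soundness bound with the reduction from the first paragraph completes the proof of $\Pr[1 \leftarrow \mathsf{PPoPK}_{\mathcal{A}}(c, m', pk, sk)] \leq 2^{-k/2}$.
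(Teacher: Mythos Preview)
Your proposal is correct and follows essentially the same approach as the paper: observe that when $m'\neq\mathsf{HE.Dec}(sk,c)$ the ciphertext $c'$ computed by the server encrypts $m-m'\neq 0$, so acceptance requires the adversary to pass $\mathsf{PZKPoPKoZ}$ on a non-zero encryption, and then invoke the soundness of that $\Sigma$-protocol to get the $2^{-k/2}$ bound. The only difference is granularity: the paper simply cites \cite{damgaard2010generalization} for the soundness of $\mathsf{PZKPoPKoZ}$ and the $2^{-k/2}$ figure, whereas you go further and sketch the special-soundness/rewinding argument that underlies that citation---this extra detail is sound but not required, since the paper treats the underlying bound as a black box.
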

\begin{proof}
Assume that a PPT adversary $\mathcal{A}$ with key pair $(sk,pk)$ provides a fraudulent decryption $m'$ of a ciphertext $c$, while $\mathsf{HE.Dec}$ $(sk,c)$ $=m$ and $m\neq m'$, the server accepts $m'$ only if $1\leftarrow \mathsf{PPoPK}(c,$ $m',pk,sk)$. Given the algorithm \ref{alg:p-zkp-m}, the server randomly chooses $r_s\in \mathbb{Z}_n^*$ to compute $c'=c-\mathsf{HE.Enc}(pk_i,m',r_s)(\bmod\ n^2) = \mathsf{Enc}(m-m')$ and sends $c'$ to the adversary $\mathcal{A}$. To make the server accept $m'$, the adversary $\mathcal{A}$ has to prove that $c'$ is a ciphertext of zero. Therefore, the security of $\mathsf{PPoPK}$ follows from the security of $\mathsf{PZKPoPKoZ}$ such that the probability of the adversary make the server accept the fraudulent decryption $m'$ is less than $2^{-k/2}$ where $k$ is the bit length of $n$ as the Paillier public key. 
\end{proof}

\section{Indistinguishability under chosen-plaintext attack (IND-CPA)}
\label{app:ind-cpa}

\begin{defx}[IND-CPA]
Let $\mathsf{HE}=(\mathsf{KeyGen},$$ \mathsf{Enc},$$ \mathsf{Dec})$ be a homomorphic encryption scheme, and $\mathcal{A}$ be a probabilistic polynomial-time (PPT) adversary. Consider the following experiment:\\
$\mathbf{EXP}_{\mathsf{HE},\mathcal{A}}^{\mathsf{IND-CPA}}(\lambda)$:
\begin{itemize}
    \item[(1)] $b \stackrel{\$}{\gets} \{0,1\}$, $\{pk,sk\} \stackrel{\$}{\gets} \mathsf{HE.KeyGen}(p,q)$
    \item[(2)] $(m_0,m_1) \gets \mathcal{A}(pk)$
    \item[(3)] $c \gets \mathsf{HE.Enc}(pk,m_b,r)$
    \item[(4)] $b'\gets \mathcal{A}(c)$
    \item[(5)] Output $1$ if $b=b'$, 0 o/w.
\end{itemize}
Define the advantage of the adversary $\mathcal{A}$ as 
$$\mathbf{Adv}_{\mathsf{HE},\mathcal{A}}^{\mathsf{IND-CPA}}(\lambda):=| \mathsf{Pr}[\mathbf{EXP}_{\mathsf{HE},\mathcal{A}}^{\mathsf{IND-CPA}}(\lambda)=1]-\frac{1}{2}| $$
Then we say such HE scheme is indistinguishable under chosen-plaintext attack if for any PPT adversariy $\mathcal{A}$, there exists a negligible function $\mathsf{negl}(\cdot)$ such that $\mathcal{A}$ as $\mathbf{Adv}_{\mathsf{HE},\mathcal{A}}^{\mathsf{IND-CPA}}(\lambda) \leqslant \mathsf{negl}(\lambda)$.
\end{defx}

Paillier crypto-system provides semantic security against chosen-plaintext attacks (IND-CPA). The challenge of successfully distinguishing ciphertext essentially can be deducted to deciding composite residuosity, which is believed to be intractable. Thus, an adversary will be unable to distinguish pairs of ciphertexts based on the message they encrypt using Paillier crypto-system \cite{paillier1999public}.

\section{Consistency Check}
\label{app:con-check}

Consistency check guarantees that a set of users of which the number is greater than the threshold $t$ for the Shamir scheme have the same view of dropped users (see \cite{bonawitz2017practical,bell2020secure} for the detailed security proof). To achieve this goal, consistency check heavily leverage signature schemes which can be used to prove the origin of a message. Denote a signature scheme as $(\mathcal{F}_{kg}, \mathcal{F}_{sig}, \mathcal{F}_{vrfy})$. Randomised algorithm $\mathcal{F}_{kg}\rightarrow (sk, pk)$ outputs a secret key $sk$ and a public key $pk$. Randomised algorithm $\mathcal{F}_{sig}(m,sk)\rightarrow\delta$ uses $sk$ to assign a signature $\delta$ on the message $m$. Deterministic algorithm $\mathcal{F}_{vrfy}(m,pk,\delta)\rightarrow\{0,1\}$ takes in the $pk$, $m$, $\delta$ and returns $1$ if $\delta$ is a signature on $m$ and $0$ otherwise. Assume each alive user $u \in \mathcal{U}_a$ has a key pair $(sk_u, pk_u)$, and the server has a key pair $(sk_s, pk_s)$ generated using $\mathcal{F}_{kg}$. The server maintains the list of alive users as $\mathcal{U}_l$. Consistency check works as follows:
\begin{enumerate}
    \item Each user $u \in \mathcal{U}_a$ fetches the list of $\mathcal{U}_l$ from the server, denoted as $\mathcal{U}_l^u$.
    \item Each user $u \in \mathcal{U}_a$ assigns the signature on received $\mathcal{U}_l^u$ as $\mathcal{F}_{sig}(\mathcal{U}_l^u,sk_u)\rightarrow\delta_u$, and send $(\mathcal{U}_l^u,\delta_u)$ to the server.
    \item The server receives $(\mathcal{U}_l^u,\delta_u)$ from at least $t$ users to build a list denoted as $\mathcal{U}_l^{\prime}$, and forwards them to the user $u \in \mathcal{U}_a$.
    \item For each user $u \in \mathcal{U}_a$, if $|\mathcal{U}_l^{\prime}|<t$ or $\mathcal{U}_l^{\prime} \not\subset \mathcal{U}_l$, abort. 
\end{enumerate}

Note that consistency check provides the privacy guarantee against active adversaries only when the threshold $t$ is set to be a proper value. We discuss its bounds in Appendix \ref{app:disc-tol}.

\section{Discussion on Tolerance}
\label{app:disc-tol}
Next, we discuss the maximum number of malicious adversaries allowed and the lower bound of the threshold $t$ for the Shamir scheme. Recall that there are $n$ clients including $n_a$ malicious clients. Let the number of signatures from honest clients on two list $l_1$, $l_2$ be $n_1$ and $n_2$ respectively, and the number of signatures from malicious clients is $n_a$. To have both $l_1$ and $l_2$ to pass the consistency check, the number of signatures on them should be greater than the threshold $t$ such that $n_1+n_a \geq t, n_2+n_a \geq t$, subject to $ n_1+n_2+n_a\leq n$. To guarantee the security, we have $t\geq\frac{n+n_a}{2}$. Beside, since the server cannot the forge signatures of the honest parties, there is no two sets that $2(t-n_a)\leq n-n_a$ which implies the maximum number of malicious adversaries allowed is $\lceil \frac{n}{3} \rceil-1$, and thus the lower bound threshold $t$ is $\lfloor \frac{2n}{3} \rfloor+1$. Since $\lceil \frac{n}{3} \rceil-1 < \lfloor \frac{2n}{3} \rfloor+1 \leq t$, the security of Shamir secret sharing scheme is also guaranteed. More details about the derivation of the above-mentioned bounds can be found in \cite{bonawitz2017practical}. Note that dropped users do not affect the privacy and security of step setup and step 0-3, as each client executes the protocol with the server independently. Therefore, the bounds of the number of malicious adversaries and threshold regarding tolerance still follow those of the secure aggregation scheme.

%
%
%

%

\end{document}